\documentclass[
    aps,
    pra,
    reprint,
    amsmath,
    amssymb,
    superscriptaddress,
    nofootinbib
]{revtex4-2}

\usepackage{graphicx}
\usepackage{dcolumn}
\usepackage{bm}
\usepackage[utf8]{inputenc}
\usepackage[T1]{fontenc}
\usepackage{mathptmx}
\usepackage{hyperref}
\usepackage{amsmath}
\usepackage{amssymb}
\usepackage{amsthm}

\hypersetup{
    setpagesize = false,
    colorlinks  = true,
    urlcolor    = blue,
    linkcolor   = black,
    citecolor   = black
}

\newtheorem{theorem}{Theorem}
\newtheorem{lemma}{Lemma}
\newtheorem{proposition}{Proposition}
\newtheorem{corollary}{Corollary}
\newtheorem{remark}{Remark}

\newcommand{\id}{\operatorname{id}}
\newcommand{\EB}{\mathrm{EB}}
\newcommand{\EA}{\mathrm{EA}}
\newcommand{\Sep}{\mathrm{Sep}_{+}}
\newcommand{\Dcal}{\mathcal{D}}
\newcommand{\Mtwo}{\mathcal{B}(\mathbb{C}^{2})}

\begin{document}

\title{Feasibility Ordering of Entanglement-Source Placement for Qubit Channels}

\author{Samuel A. M\'arquez Gonz\'alez}

\affiliation{
Department of Mathematical Sciences,
Rutgers University,
Camden, New Jersey 08102, USA
}

\email{sam959@scarletmail.rutgers.edu}

\begin{abstract}
This work studies the placement of an entanglement source along a communication line formed by two noisy qubit channels. Recent work argued, on analytical and numerical grounds, that midpoint placement should be at least as favorable as endpoint placement. Here it is shown that, for arbitrary qubit channels, if a sequential composition can preserve entanglement, then the corresponding parallel action cannot annihilate all entanglement. The proof uses the transpose-factorization criterion introduced in that recent work. Quantum Sinkhorn scaling converts every strictly positive qubit channel into a unital representative, and the special normal form of unital qubit channels then yields the required factorization of the transposed map through the original channel. Depolarizing regularization and the closedness of the set of entanglement-breaking channels extend the result to arbitrary channels. Consequently, $\Lambda_1 \otimes \Lambda_2 \in \mathrm{EA}$ implies $\Lambda_2 \circ \Lambda_1 \in \mathrm{EB}$, and, by exchanging the two channels, the same holds for the opposite composition order. This proves the recent conjecture that midpoint placement is optimal for all qubit channels, in the feasibility sense in which that optimality was originally defined.
\end{abstract}

\maketitle

\section{Introduction}

Entanglement is a central resource in quantum information science, underpinning communication, cryptography, teleportation, distributed computation, and networked quantum protocols \cite{HorodeckiReview2009,Streltsov2012QuantumCost,Chuan2012Discord,Cubitt2003Separable,Zuppardo2016Excessive}. A basic architectural problem therefore involves both how to generate entanglement and where to generate it when noisy quantum channels connect two parties. This question belongs to a broader line of work on one-shot entanglement transmission, noisy entanglement sharing, and the role of pre-existing correlations in distribution protocols \cite{Pal2014Singlet,Streltsov2015Unified,SiddhuSmolin2023Optimal}.

The structure of qubit channels makes this setting especially attractive analytically. Qubit channels admit useful affine and canonical parametrizations \cite{FujiwaraAlgoet1999,RuskaiSzarekWerner2002,BraunEtAl2014}, while the concatenation of channels has a well-developed algebraic and dynamical theory \cite{WolfCirac2008}. At the same time, the distinction between entanglement breaking (EB) and entanglement annihilating (EA) is essential. EB channels destroy entanglement between the transmitted system and every reference system, whereas EA bipartite channels destroy entanglement internal to the bipartite system on which they act \cite{MoravcikovaZiman2010,HorodeckiShorRuskai2003,Ruskai2003QubitEB}. Local two-qubit EA channels and general criteria for EA maps have been developed in Refs.~\cite{FilippovRybarZiman2012,FilippovZiman2013,LamiHuber2016}, with more recent extensions based on Schmidt number and ordered-vector-space methods \cite{AubrunMullerHermes2023,MallickGangulyMajumdar2026,LaPianaMullerHermes2026}.

Masajada, Fellous-Asiani, and Streltsov recently compared two source placements for a line formed by qubit channels $\Lambda_1$ and $\Lambda_2$ \cite{MasajadaFellousStreltsov2026}. With the source at the midpoint, the two halves of an entangled input traverse separate channels,
\begin{equation}
  \Lambda_{\mathrm{mid}} = \Lambda_1 \otimes \Lambda_2 .
  \label{eq:mid}
\end{equation}
With the source at an endpoint, one half is retained locally while the other passes through the channels in series,
\begin{equation}
  \Lambda_{\mathrm{edge}} = \mathrm{id} \otimes (\Lambda_2 \circ \Lambda_1) .
  \label{eq:edge}
\end{equation}
These authors conjectured that midpoint placement is optimal for all qubit channels. Optimality is meant there in a feasibility sense, in which one placement is better than another if entanglement surviving the second for some input guarantees that entanglement survives the first for some input. The same work also proposes a stronger quantitative conjecture, supported by numerical tests on randomly generated channels, that compares the minimum eigenvalue of the partially transposed Choi matrix of the sequential channel with an SDP bound for the parallel channel. That inequality is not addressed here. The present work proves the feasibility conjecture, namely that no pair of qubit channels allows an endpoint placement to preserve some entanglement while the midpoint placement annihilates all of it. In channel language,
\begin{equation}
  \Lambda_2 \circ \Lambda_1 \notin \mathrm{EB}
  \;\Longrightarrow\;
  \Lambda_1 \otimes \Lambda_2 \notin \mathrm{EA} .
  \label{eq:qualitative}
\end{equation}

Ref.~\cite{MasajadaFellousStreltsov2026} already provides two important routes toward Eq.~\eqref{eq:qualitative}. Its unital-channel result permits an arbitrary completely positive intermediate filter, and a second criterion proves Eq.~\eqref{eq:qualitative} whenever the channel transpose admits a factorization
\begin{equation}
    \Lambda_1^{T}
    =\mathcal{F}\circ\Lambda_1\circ\mathcal{E},
    \label{eq:source-factorization-preview}
\end{equation}
with \(\mathcal{F}\) positive and \(\mathcal{E}\) completely positive. The same work constructs this factorization almost everywhere for qubit channels of Kraus rank at most three. The argument below extends that route: quantum Sinkhorn scaling produces the factorization \eqref{eq:source-factorization-preview} for every strictly positive qubit channel, without a Kraus-rank restriction, and a depolarizing regularization then reaches the boundary of the full channel set.

This strategy connects naturally with quantum Sinkhorn scaling and filter normal forms \cite{Sinkhorn1964,Gurvits2004,GeorgiouPavon2015,Cariello2019,Filippov2021Sinkhorn}, invertible local-filter methods for entanglement robustness \cite{VerstraeteDehaeneDeMoor2001,VerstraeteDehaeneDeMoor2003,FilippovFrizenKolobova2018}, and the broader literature on how channel composition enters EB classes \cite{LamiGiovannetti2015Indices,LamiGiovannetti2016Saving,KennedyManorPaulsen2018,ChristandlMullerHermesWolf2019,ChenYangTang2019,SinghNechita2022,RahamanJaquesPaulsen2018,HansonRouzeFranca2020,AhiableEtAl2021}. The relation established here is different from an iteration theorem: ordinary physical EA of a parallel pair is converted directly into EB of the ordered sequential composition.

Section~II fixes the channel, separability, transposition, filtering, and scaling conventions. Section~III establishes the transpose factorization for strictly positive qubit channels and extends the resulting implication to arbitrary channels. Section~IV translates the theorem into the source-placement setting. Section~V compares the result with nearby EA/EB, Lorentz-cone, and channel-composition results. Section~VI isolates the precise higher-dimensional obstruction, followed by the conclusion in Sec.~VII.

\section{Preliminaries}

\subsection{Quantum channels, Choi states, and separability}

Consider completely positive trace-preserving (CPTP) maps
\begin{equation}
    \Lambda:\Mtwo\rightarrow\Mtwo.
\end{equation}
The Choi--Jamio\l kowski representation identifies complete positivity with positivity of a bipartite operator associated with the map \cite{Jamiolkowski1972,Choi1975}. For a qubit channel, define the normalized Choi state
\begin{equation}
    J(\Lambda)
    = (\id\otimes\Lambda)
      \left(\lvert\Phi^+\rangle\!\langle\Phi^+\rvert\right),
\end{equation}
where $\mathrm{id}$ denotes the identity channel and
\begin{equation}
    \lvert\Phi^+\rangle
    =\frac{1}{\sqrt{2}}(\lvert00\rangle+\lvert11\rangle).
\end{equation}
A channel is EB if and only if its Choi state is separable \cite{HorodeckiShorRuskai2003,Ruskai2003QubitEB}. For two-qubit states, separability is equivalent to positivity under partial transposition (PPT) \cite{Peres1996,Horodecki1996}. Negativity provides a computable entanglement monotone based on the spectrum of the partial transpose \cite{VidalWerner2002}; it is useful in numerical studies of source placement, although it is not needed in the proof below.

It is convenient to work with the cone of unnormalized separable positive operators,
\begin{equation}
\Sep(A{:}B)
=\left\{
\sum_i X_i\otimes Y_i:\; X_i\ge0,\;Y_i\ge0
\right\}.
\label{eq:sepcone}
\end{equation}
A normalized bipartite state is separable exactly when it belongs to this cone and has unit trace. This homogeneous formulation avoids unnecessary normalization factors when completely positive maps are not trace preserving.

A bipartite CP map \(\Psi\) is called EA if
\begin{equation}
    \Psi(X)\in\Sep(A{:}B)
    \quad\text{for every }X\ge0.
\label{eq:EAdef}
\end{equation}
For a CPTP map, Eq.~\eqref{eq:EAdef} is equivalent to the usual definition involving all density operators \cite{MoravcikovaZiman2010,FilippovRybarZiman2012,FilippovZiman2013}. A single-system channel \(\Lambda\) is EB if \(\id\otimes\Lambda\) is EA. In finite dimensions, the Choi characterization makes the EB set a closed convex set \cite{HorodeckiShorRuskai2003,AhiableEtAl2021}.

\subsection{Local positive maps and completely positive filters}

For an operator \(A\), define the completely positive map
\begin{equation}
    \Phi_A(X)=AXA^{\dagger}.
\label{eq:filter}
\end{equation}
If \(A\) is invertible, then \(\Phi_A\) is an invertible linear map and its inverse is the CP map \(\Phi_{A^{-1}}\). These maps need not preserve the trace. Such local filtering operations are standard in the classification and manipulation of bipartite entanglement \cite{VerstraeteDehaeneDeMoor2001,VerstraeteDehaeneDeMoor2003} and in filter-normal-form arguments for noisy qubit dynamics \cite{FilippovFrizenKolobova2018}.

\begin{lemma}[Separable-cone stability]
\label{lem:sepstability}
If \(\mathcal{F}_A\) and \(\mathcal{F}_B\) are positive linear maps, then
\begin{equation}
(\mathcal{F}_A\otimes\mathcal{F}_B)\big[\Sep(A{:}B)\big]
\subseteq \Sep(A'{:}B').
\end{equation}
Consequently, local CP prefilters and local positive postmaps preserve the separability conclusion whenever the intermediate operator is separable.
\end{lemma}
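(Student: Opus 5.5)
The plan is a direct computation on the generators of the cone \(\Sep(A{:}B)\) defined in Eq.~\eqref{eq:sepcone}, followed by two short corollaries for the ``consequently'' clause. The only properties I will use are linearity of \(\mathcal{F}_A\otimes\mathcal{F}_B\), the identity \((\mathcal{F}_A\otimes\mathcal{F}_B)(X\otimes Y)=\mathcal{F}_A(X)\otimes\mathcal{F}_B(Y)\), and positivity of each local map.

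First I will prove the inclusion. Take an arbitrary element \(Z=\sum_i X_i\otimes Y_i\) with \(X_i\ge0\) and \(Y_i\ge0\). Linearity of the tensor-product map and its action on product operators give
\begin{equation}
(\mathcal{F}_A\otimes\mathcal{F}_B)(Z)=\sum_i \mathcal{F}_A(X_i)\otimes\mathcal{F}_B(Y_i).
\end{equation}
Positivity of \(\mathcal{F}_A\) and \(\mathcal{F}_B\) gives \(\mathcal{F}_A(X_i)\ge0\) on \(A'\) and \(\mathcal{F}_B(Y_i)\ge0\) on \(B'\). The right-hand side is therefore again a finite sum of tensor products of positive operators, so it lies in \(\Sep(A'{:}B')\).

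The point worth flagging is that no complete positivity is needed. The map \(\mathcal{F}_A\otimes\mathcal{F}_B\) may fail to be positive on general operators; the transpose is the standard example. This causes no difficulty, because the map is only ever applied to product positive operators, never to entangled inputs. So there is no real obstacle here; the only care required is to keep the argument at the level of generators rather than arbitrary positive operators.

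For the consequence, I will separate the two roles the maps play.
\begin{enumerate}
\item \emph{Local positive postmaps.} If an intermediate operator \(W\) has been shown to be separable, the inclusion just proved gives \((\mathcal{F}_A\otimes\mathcal{F}_B)(W)\in\Sep\).
\item \emph{Local CP prefilters.} Given \(\mathcal{E}_A\otimes\mathcal{E}_B\) with each \(\mathcal{E}_j\) completely positive, the product map is completely positive, so it sends every \(X\ge0\) to a positive operator. Applying the EA hypothesis Eq.~\eqref{eq:EAdef} to that positive operator yields a separable intermediate operator.
\item \emph{Composition.} Chaining the two observations, the composite map (local CP prefilter, then the EA map, then the local positive postmap) sends every \(X\ge0\) into \(\Sep\). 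This is the sense in which separability is preserved.
\end{enumerate}
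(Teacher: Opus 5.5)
Your proof is correct and is the paper's own argument: write an element of \(\Sep(A{:}B)\) as a finite sum of positive product operators and apply positivity of each local map term by term. Your explicit treatment of the ``consequently'' clause is also right: a CP prefilter sends every positive input to a positive operator, the EA map then gives a separable intermediate, and a positive postmap keeps it separable. The paper leaves that step implicit but uses it in exactly this form in Lemma~\ref{lem:factorcriterion} and in the Remark.
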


\begin{proof}
Every element of \(\Sep(A{:}B)\) is a finite sum \(\sum_i X_i\otimes Y_i\) with \(X_i,Y_i\ge0\). Positivity of the local maps gives \(\mathcal{F}_A(X_i)\ge0\) and \(\mathcal{F}_B(Y_i)\ge0\) term by term, so the image remains a finite sum of positive product operators.
\end{proof}

\subsection{Strict positivity and quantum Sinkhorn scaling}

A linear map \(\Lambda\) is strictly positive if
\begin{equation}
X\ge0,\quad X\neq0
\quad\Longrightarrow\quad
\Lambda(X)>0,
\label{eq:strictpositive}
\end{equation}
where \(>0\) denotes positive definiteness. This property is also called positivity improving in the operator-scaling literature \cite{GeorgiouPavon2015,Filippov2021Sinkhorn}. Quantum analogues of classical Sinkhorn scaling relate strictly positive maps to doubly stochastic or unital representatives \cite{Sinkhorn1964,Gurvits2004,GeorgiouPavon2015,Cariello2019}. In particular, Theorem~5 of Georgiou and Pavon gives the doubly stochastic scaling for positivity-improving Kraus maps \cite{GeorgiouPavon2015}. Filippov summarizes the form needed below, including the inverse filtering relation, and traces it to the operator-scaling results of Gurvits and Georgiou--Pavon \cite{Gurvits2004,GeorgiouPavon2015,Filippov2021Sinkhorn}.

\begin{lemma}[Quantum Sinkhorn scaling]
\label{lem:sinkhorn}
Let \(\Lambda\) be a strictly positive CPTP map on a finite-dimensional matrix algebra. Then there exist positive-definite invertible operators \(A\) and \(B\) such that
\begin{equation}
    \Upsilon
    =\Phi_A\circ\Lambda\circ\Phi_B
\label{eq:sinkhorn}
\end{equation}
is trace preserving and unital. Equivalently,
\begin{equation}
    \Lambda
    =\Phi_{A^{-1}}\circ\Upsilon\circ\Phi_{B^{-1}}.
\label{eq:sinkhorninverse}
\end{equation}
\end{lemma}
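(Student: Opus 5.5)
The plan is to reduce the statement to the classical operator-scaling problem and solve it by an iterative Sinkhorn scheme whose convergence is controlled by Gurvits' capacity. Write \(\Lambda(X)=\sum_k K_k X K_k^{\dagger}\). With \(A,B>0\), the map \(\Upsilon=\Phi_A\circ\Lambda\circ\Phi_B\) has Kraus operators \(AK_kB\). Since \(\Upsilon^{*}=\Phi_B\circ\Lambda^{*}\circ\Phi_A\), the conditions to impose are
\begin{equation}
\Upsilon(I)=A\,\Lambda(B^2)\,A=I,\qquad \Upsilon^{*}(I)=B\,\Lambda^{*}(A^2)\,B=I .
\end{equation}
Setting \(P=B^2\) and \(Q=A^2\), these read \(Q=\Lambda(P)^{-1}\) and \(P=\Lambda^{*}(Q)^{-1}\). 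So the task is to find a positive-definite fixed point of the map
\begin{equation}
T(P)=\Lambda^{*}\big(\Lambda(P)^{-1}\big)^{-1}.
\end{equation}
Strict positivity guarantees that \(\Lambda(P)>0\) whenever \(P\ge0\) and \(P\neq0\). Because \(\Lambda\) is trace preserving, \(\Lambda^{*}\) is unital and therefore also maps positive-definite operators to positive-definite operators. Hence \(T\) is well defined on positive-definite operators, and it is homogeneous of degree one. The equivalent form \eqref{eq:sinkhorninverse} then follows immediately from the invertibility of \(\Phi_A\) and \(\Phi_B\), which holds because \(\Phi_{A^{-1}}\circ\Phi_A=\id\).

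To produce the fixed point, I would use one of two routes.

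\textbf{Hilbert-metric contraction.} Strict positivity on a finite-dimensional space implies, by compactness of the set of normalized positive operators, that \(\Lambda\) maps the cone into a compact subset of its interior. Birkhoff's theorem then says \(\Lambda\) is a strict contraction in Hilbert's projective metric. The map \(\Lambda^{*}\) is at least nonexpansive, and inversion \(X\mapsto X^{-1}\) is an isometry of that metric. Consequently \(T\) is a strict contraction on the projectivized cone of positive-definite operators, which is complete under the Hilbert metric. The Banach fixed-point theorem gives \(T(P)=\lambda P\) for some \(\lambda>0\).

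\textbf{Capacity minimization (alternative).} Minimize Gurvits' capacity \(\det\Lambda(P)/\det P\) over positive-definite \(P\) with \(\det P=1\). Strict positivity bounds the ratio away from the boundary of the cone, so a minimizer exists. Its stationarity condition is exactly the fixed-point equation up to a scalar.

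It remains to remove the scalar \(\lambda\). Substituting the fixed point into the two conditions gives \(\Upsilon^{*}(I)=\lambda^{-1}\cdot\) (a normalized operator) after rescaling. Taking traces of \(\Upsilon(I)\) and \(\Upsilon^{*}(I)\) shows both equal \(\operatorname{tr}\Upsilon(I)\), since \(\operatorname{tr}\Upsilon(I)=\operatorname{tr}\Upsilon^{*}(I)\). Rescaling \(A\) and \(B\) by suitable positive constants, using \(Q=\Lambda(P)^{-1}\) exactly, then enforces \(\Upsilon(I)=I\) and \(\Upsilon^{*}(I)=I\) simultaneously. Finally, take \(A=Q^{1/2}\) and \(B=P^{1/2}\), which are positive-definite and invertible.

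\textbf{Main obstacle.} The hardest step is existence of the fixed point, specifically verifying that the projective contraction constant is strictly less than one, or alternatively that the capacity is coercive. This is exactly where strict positivity is indispensable; without it, scalings may exist only in a limit. If the contraction bookkeeping becomes delicate, I would instead cite Theorem~5 of Georgiou--Pavon \cite{GeorgiouPavon2015} directly, as the paper indicates, and present the argument above only as a sketch.
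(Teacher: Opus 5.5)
Your proposal is correct and takes essentially the paper's approach: the paper gives no proof of its own and cites Theorem~5 of Georgiou--Pavon, whose argument is the Birkhoff/Hilbert-metric contraction you sketch, and Gurvits' capacity minimization is the standard alternative route. One clarification: rescaling \(A,B\) by constants multiplies \(\Upsilon(I)\) and \(\Upsilon^{*}(I)\) by the same factor, so it cannot repair a mismatch; what removes the scalar is the trace identity \(d=\operatorname{tr}(Q\Lambda(P))=\operatorname{tr}(\Lambda^{*}(Q)P)=\lambda^{-1}d\), where \(d\) is the dimension, which forces \(\lambda=1\) directly.
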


For qubit channels, unitality permits a canonical unitary--Pauli--unitary form \cite{RuskaiSzarekWerner2002,ChoiLi2023}. This additional structure controls the channel transpose and is the only genuinely qubit-specific ingredient in the proof.

\subsection{Channel transposition and the factorization criterion}

Fix a qubit basis. If a CP map has Kraus representation
\begin{equation}
    \Lambda(X)=\sum_i K_i X K_i^\dagger,
\end{equation}
its channel transpose is the CP map
\begin{equation}
    \Lambda^T(X)=\sum_i K_i^T X \overline{K_i}.
\label{eq:channeltranspose}
\end{equation}
This definition is independent of the chosen Kraus representation, although it depends on the underlying basis \cite{MasajadaFellousStreltsov2026}. The basic maximally-entangled-state identity gives
\begin{equation}
J(\Lambda_2\circ\Lambda_1)
=
(\Lambda_1^T\otimes\Lambda_2)
\left(\lvert\Phi^+\rangle\!\langle\Phi^+\rvert\right).
\label{eq:choitransfer}
\end{equation}

The following criterion is the homogeneous conic form of the transpose-factorization result developed in Ref.~\cite{MasajadaFellousStreltsov2026}. Its short proof is included to make the logical dependence explicit.

\begin{lemma}[Transpose-factorization criterion]
\label{lem:factorcriterion}
Let \(\Lambda_1\) and \(\Lambda_2\) be qubit channels. Suppose that, in some basis,
\begin{equation}
    \Lambda_1^T
    =\mathcal{F}\circ\Lambda_1\circ\mathcal{E},
\label{eq:transposefactor}
\end{equation}
where \(\mathcal{F}\) is positive and \(\mathcal{E}\) is completely positive. If
\begin{equation}
    \Lambda_1\otimes\Lambda_2\in\EA,
\end{equation}
then
\begin{equation}
    \Lambda_2\circ\Lambda_1\in\EB.
\end{equation}
\end{lemma}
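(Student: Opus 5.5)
We use the Choi transfer identity \eqref{eq:choitransfer} together with the factorization \eqref{eq:transposefactor}, and then apply separable-cone stability (Lemma~\ref{lem:sepstability}). Substituting the factorization gives
\begin{equation}
J(\Lambda_2\circ\Lambda_1)
=(\mathcal{F}\otimes\id)\,(\Lambda_1\otimes\Lambda_2)\,(\mathcal{E}\otimes\id)\left(\lvert\Phi^+\rangle\!\langle\Phi^+\rvert\right).
\end{equation}
This rests on the elementary identity $(\mathcal{F}\circ\Lambda_1\circ\mathcal{E})\otimes\Lambda_2=(\mathcal{F}\otimes\id)\circ(\Lambda_1\otimes\Lambda_2)\circ(\mathcal{E}\otimes\id)$, which is just functoriality of the tensor product of linear maps. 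The identity needs no positivity assumptions, so it is valid even though $\mathcal{F}$ is only positive.

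Next we control the innermost operator. Since $\mathcal{E}$ is completely positive, $\mathcal{E}\otimes\id$ is a positive map. Hence
\begin{equation}
X:=(\mathcal{E}\otimes\id)\left(\lvert\Phi^+\rangle\!\langle\Phi^+\rvert\right)\ge0.
\end{equation}
We only know that $X$ is positive, not that it is a normalized state, because $\mathcal{E}$ need not preserve the trace. This is where the homogeneous definition \eqref{eq:EAdef} matters. Since $\Lambda_1\otimes\Lambda_2\in\EA$, that definition applies to every $X\ge0$, so $(\Lambda_1\otimes\Lambda_2)(X)\in\Sep$. Alternatively, if $X\neq0$ one can rescale it to a density operator and use linearity together with the fact that $\Sep$ is a cone. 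The trivial case $X=0$ gives the zero operator, which lies in $\Sep$.

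Finally, we apply Lemma~\ref{lem:sepstability} with local maps $\mathcal{F}$ (positive) and $\id$ (positive). This shows that $(\mathcal{F}\otimes\id)$ maps the separable operator above into $\Sep$. Therefore $J(\Lambda_2\circ\Lambda_1)\in\Sep$. Being a normalized state, it is a separable state, and the Choi characterization of EB channels gives $\Lambda_2\circ\Lambda_1\in\EB$.

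The argument contains no serious obstacle. The only point needing care is that $\mathcal{F}\otimes\id$ is not a positive map in general. We never apply it to a general positive operator, only to a separable one, and that is precisely what Lemma~\ref{lem:sepstability} covers. A secondary check is that the factorization is taken in the same basis used to define $\Lambda_1^T$ and $\lvert\Phi^+\rangle$ in \eqref{eq:choitransfer}. This is consistent because the statement of the lemma fixes that basis.
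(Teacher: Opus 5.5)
Your proposal is correct and follows the paper's proof step for step. Complete positivity of $\mathcal{E}$ makes $(\mathcal{E}\otimes\id)(\lvert\Phi^+\rangle\!\langle\Phi^+\rvert)\ge0$; the homogeneous EA definition then places its image under $\Lambda_1\otimes\Lambda_2$ in $\Sep$; Lemma~\ref{lem:sepstability} handles $\mathcal{F}\otimes\id$; and the Choi transfer identity together with the factorization identifies the result with $J(\Lambda_2\circ\Lambda_1)$. Your remarks that $\mathcal{F}\otimes\id$ is applied only to a separable operator and that the non-normalized intermediate operator is harmless are accurate, and they make explicit what the paper leaves implicit.
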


\begin{proof}
Set \(\phi^+=\lvert\Phi^+\rangle\!\langle\Phi^+\rvert\). Complete positivity of \(\mathcal{E}\) implies
\[
(\mathcal{E}\otimes\id)(\phi^+)\ge0.
\]
If \(\Lambda_1\otimes\Lambda_2\) is EA, then
\[
Z=(\Lambda_1\otimes\Lambda_2)
   \big[(\mathcal{E}\otimes\id)(\phi^+)\big]
\in\Sep.
\]
Lemma~\ref{lem:sepstability} and positivity of \(\mathcal{F}\) give
\[
(\mathcal{F}\otimes\id)(Z)\in\Sep.
\]
Using Eqs.~\eqref{eq:choitransfer} and \eqref{eq:transposefactor},
\[
(\mathcal{F}\otimes\id)(Z)
=
J(\Lambda_2\circ\Lambda_1).
\]
Thus the Choi state of \(\Lambda_2\circ\Lambda_1\) is separable, so the composition is EB \cite{HorodeckiShorRuskai2003}.
\end{proof}

\section{Main result}

The first step is to show that quantum Sinkhorn scaling supplies the transpose factorization required by Lemma~\ref{lem:factorcriterion} throughout the strictly positive interior of the qubit-channel set.

\begin{proposition}[Transpose factorization for strictly positive qubit channels]
\label{prop:strictfactor}
Let \(\Lambda\) be a strictly positive qubit channel. Then there exist invertible completely positive maps \(\mathcal{F}\) and \(\mathcal{E}\), whose inverses are also completely positive, such that
\begin{equation}
    \Lambda^T
    =\mathcal{F}\circ\Lambda\circ\mathcal{E}.
\label{eq:strictfactor}
\end{equation}
Thus \(\Lambda\) and \(\Lambda^T\) are equivalent under invertible local CP filtering. In particular, the weaker positive/CP factorization criterion of Lemma~\ref{lem:factorcriterion} holds without a restriction on Kraus rank.
\end{proposition}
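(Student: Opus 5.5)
The plan is to combine the Sinkhorn normal form of Lemma~\ref{lem:sinkhorn} with the unitary--Pauli--unitary form of unital qubit channels, and to push the channel transpose \eqref{eq:channeltranspose} through each factor. The first step records two elementary rules for the channel transpose, both read off from Kraus operators. For a filter, the single Kraus operator \(C\) maps to \(C^{T}\), and \(\overline{C}=(C^{T})^{\dagger}\), so
\begin{equation*}
(\Phi_C)^T=\Phi_{C^T}.
\end{equation*}
For a composition, the Kraus operators of \(\Lambda_2\circ\Lambda_1\) are \(K_jL_i\), and \((K_jL_i)^T=L_i^TK_j^T\), so
\begin{equation*}
(\Lambda_2\circ\Lambda_1)^T=\Lambda_1^T\circ\Lambda_2^T.
\end{equation*}
Both identities are insensitive to the choice of Kraus representation, since the transpose is well defined.

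Next I apply Lemma~\ref{lem:sinkhorn} to the strictly positive channel \(\Lambda\). This gives \(\Upsilon=\Phi_A\circ\Lambda\circ\Phi_B\), unital and trace preserving, with \(A,B>0\) invertible. For a unital qubit channel I invoke the canonical form \cite{RuskaiSzarekWerner2002}:
\begin{equation*}
\Upsilon=\Phi_U\circ\mathcal{D}\circ\Phi_V,
\end{equation*}
with \(U,V\) unitary and \(\mathcal{D}\) a Pauli channel,
\begin{equation*}
\mathcal{D}(X)=\sum_{k=0}^{3}p_k\,\sigma_kX\sigma_k, \qquad p_k\ge0.
\end{equation*}
This is where I expect the main care to be needed. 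The Bloch matrix of \(\Upsilon\) has a singular value decomposition \(R_1\,\mathrm{diag}(\lambda)\,R_2\), and the sign ambiguities must be absorbed so that \(R_1,R_2\in SO(3)\). Only then do they lift to unitary conjugations, and the diagonal part, being completely positive by Fujiwara--Algoet, is a genuine Pauli channel. I would cite this rather than reprove it. The key observation is that \(\mathcal{D}^T=\mathcal{D}\): each \(\sigma_k^T=\pm\sigma_k\), and the sign cancels in \(\sigma_k^TX\overline{\sigma_k}\).

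Combining these facts, I obtain
\begin{equation*}
\Upsilon^T=\Phi_{V^T}\circ\mathcal{D}\circ\Phi_{U^T}
\end{equation*}
together with
\begin{equation*}
\mathcal{D}=\Phi_{U^\dagger}\circ\Upsilon\circ\Phi_{V^\dagger}.
\end{equation*}
Hence \(\Upsilon^T=\Phi_{V^TU^\dagger}\circ\Upsilon\circ\Phi_{V^\dagger U^T}\). Transposing the Sinkhorn relation \eqref{eq:sinkhorninverse} gives
\begin{equation*}
\Lambda^T=\Phi_{(B^{-1})^T}\circ\Upsilon^T\circ\Phi_{(A^{-1})^T}.
\end{equation*}
Substituting \(\Upsilon=\Phi_A\circ\Lambda\circ\Phi_B\) then yields
\begin{equation*}
\Lambda^T=\Phi_{(B^{-1})^TV^TU^\dagger A}\circ\Lambda\circ\Phi_{BV^\dagger U^T(A^{-1})^T},
\end{equation*}
using \(\Phi_C\circ\Phi_D=\Phi_{CD}\). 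Both outer maps are filters \(\Phi_C\) with \(C\) invertible, so each is completely positive and has completely positive inverse \(\Phi_{C^{-1}}\). This gives \eqref{eq:strictfactor} with
\begin{equation*}
\mathcal{F}=\Phi_{(B^{-1})^TV^TU^\dagger A}, \qquad \mathcal{E}=\Phi_{BV^\dagger U^T(A^{-1})^T}.
\end{equation*}
Since every CP map is positive, the hypothesis of Lemma~\ref{lem:factorcriterion} follows immediately. No Kraus-rank assumption enters, because strict positivity is used only to invoke Lemma~\ref{lem:sinkhorn}.
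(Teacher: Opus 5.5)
Your proof is correct and follows the paper's argument essentially step by step. It uses Sinkhorn scaling to a unital representative, the unitary--Pauli--unitary normal form with $\mathcal{P}^T=\mathcal{P}$, order reversal under channel transposition, and substitution back through the filters; the only differences are cosmetic: you merge each side into a single filter $\Phi_C$ with invertible $C$, and you keep $(A^{-1})^T$ rather than rewriting it as $\overline{A}^{-1}$, which makes clear that Hermiticity of $A$ and $B$ is not actually needed.
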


\begin{proof}
By Lemma~\ref{lem:sinkhorn}, there exist positive-definite invertible \(A\) and \(B\) such that
\begin{equation}
    \Upsilon=\Phi_A\circ\Lambda\circ\Phi_B
\label{eq:upsilon}
\end{equation}
is a unital qubit channel. A unital qubit channel can be written as
\begin{equation}
    \Upsilon=\mathcal{U}\circ\mathcal{P}\circ\mathcal{V},
\label{eq:unitalnormal}
\end{equation}
where \(\mathcal{U}\) and \(\mathcal{V}\) are unitary channels and \(\mathcal{P}\) is a Pauli channel \cite{RuskaiSzarekWerner2002,ChoiLi2023,MasajadaFellousStreltsov2026}. Channel transposition reverses composition order. Moreover, \(\mathcal{P}^T=\mathcal{P}\), since transposition changes the sign of the \(\sigma_y\) Kraus operator but not the induced channel. Hence
\begin{equation}
    \Upsilon^T
    =\mathcal{V}^T\circ\mathcal{P}\circ\mathcal{U}^T.
\end{equation}
Define the unitary channels
\begin{equation}
    \mathcal{L}
    =\mathcal{V}^T\circ\mathcal{U}^{-1},
    \qquad
    \mathcal{R}
    =\mathcal{V}^{-1}\circ\mathcal{U}^T.
\label{eq:LR}
\end{equation}
Then
\begin{equation}
    \Upsilon^T
    =\mathcal{L}\circ\Upsilon\circ\mathcal{R}.
\label{eq:upsilontranspose}
\end{equation}

Taking the channel transpose of Eq.~\eqref{eq:sinkhorninverse} gives
\begin{equation}
\Lambda^T
=
\Phi_{\overline{B}^{-1}}
\circ\Upsilon^T
\circ\Phi_{\overline{A}^{-1}},
\label{eq:transposedsinkhorn}
\end{equation}
Since \(A\) and \(B\) are Hermitian positive definite, \(A^T=\overline{A}\) and \(B^T=\overline{B}\); hence \((A^{-1})^T=\overline{A}^{-1}\) and \((B^{-1})^T=\overline{B}^{-1}\). The matrices \(\overline{A}\) and \(\overline{B}\) are themselves positive definite and invertible, so the conjugated filters that appear below remain invertible CP maps with CP inverses. Inserting Eqs.~\eqref{eq:upsilontranspose} and \eqref{eq:upsilon} yields
\begin{align}
\Lambda^T
&=
\big(
\Phi_{\overline{B}^{-1}}
\circ\mathcal{L}\circ\Phi_A
\big)
\circ\Lambda
\circ
\big(
\Phi_B\circ\mathcal{R}\circ\Phi_{\overline{A}^{-1}}
\big).
\label{eq:factorconstructed}
\end{align}
Therefore Eq.~\eqref{eq:strictfactor} holds with
\begin{align}
\mathcal{F}
&=
\Phi_{\overline{B}^{-1}}
\circ\mathcal{L}\circ\Phi_A,
\\
\mathcal{E}
&=
\Phi_B\circ\mathcal{R}\circ\Phi_{\overline{A}^{-1}}.
\end{align}
Both maps are CP, so \(\mathcal{F}\) is in particular positive. Each factor in \(\mathcal{F}\) and \(\mathcal{E}\) is either a unitary channel or a filter \(\Phi_X\) with \(X\) invertible. Consequently, \(\mathcal{F}\) and \(\mathcal{E}\) are invertible and their inverses are CP. Equation~\eqref{eq:strictfactor} therefore gives a full equivalence between \(\Lambda\) and \(\Lambda^T\) under invertible CP pre- and postfilters.
\end{proof}

This proposition makes direct contact with the low-rank construction in Ref.~\cite{MasajadaFellousStreltsov2026}: that work obtained a factorization of \(\Lambda^T\) through \(\Lambda\) almost everywhere for Kraus rank at most three. Equation~\eqref{eq:factorconstructed} supplies such a factorization for every strictly positive qubit channel, including full-rank channels.

\begin{corollary}[Strictly positive first channel]
\label{cor:strict}
Let \(\Lambda_1\) and \(\Lambda_2\) be qubit channels, with \(\Lambda_1\) strictly positive. If
\begin{equation}
    \Lambda_1\otimes\Lambda_2\in\EA,
\end{equation}
then
\begin{equation}
    \Lambda_2\circ\Lambda_1\in\EB.
\end{equation}
\end{corollary}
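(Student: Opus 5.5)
The plan is to combine Proposition~\ref{prop:strictfactor} with Lemma~\ref{lem:factorcriterion}; no new analytic input is needed. Since \(\Lambda_1\) is a strictly positive qubit channel, Proposition~\ref{prop:strictfactor} supplies invertible CP maps \(\mathcal{F}\) and \(\mathcal{E}\) with
\begin{equation*}
    \Lambda_1^T=\mathcal{F}\circ\Lambda_1\circ\mathcal{E},
\end{equation*}
where the transpose is taken in the fixed computational basis.

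The hypotheses of Lemma~\ref{lem:factorcriterion} are then met. Because \(\mathcal{F}\) is completely positive, it is in particular positive, and \(\mathcal{E}\) is completely positive. The lemma requires only this: a positive postmap, a CP premap, and the factorization in some basis. Applying the lemma with the assumption \(\Lambda_1\otimes\Lambda_2\in\EA\) gives \(\Lambda_2\circ\Lambda_1\in\EB\) directly. No hypothesis on \(\Lambda_2\) beyond being a qubit channel is used. It enters only through the identity \eqref{eq:choitransfer} inside the lemma's proof.

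The one point to check is basis consistency. The transpose in Proposition~\ref{prop:strictfactor} must be the same channel transpose that appears in \eqref{eq:choitransfer}, which is defined relative to the basis in which \(\lvert\Phi^+\rangle\) is written. The proposition's construction uses entrywise complex conjugation and transposition (\(\overline{A}\), \(\mathcal{U}^T\), and so on) in that fixed basis, so the two agree.

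Beyond this, I do not expect any step to be a real obstacle. The substantive work, namely the Sinkhorn scaling and the Pauli normal form, is already contained in the proposition.
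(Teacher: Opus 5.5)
Your proposal is correct and matches the paper's proof exactly: Proposition~\ref{prop:strictfactor} supplies $\Lambda_1^T=\mathcal{F}\circ\Lambda_1\circ\mathcal{E}$ with $\mathcal{F}$ completely positive (hence positive) and $\mathcal{E}$ completely positive, which is precisely the hypothesis of Lemma~\ref{lem:factorcriterion}. Your remark on basis consistency is accurate, since the proposition's construction works in whichever basis is fixed for both the channel transpose and $\lvert\Phi^+\rangle$.
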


\begin{proof}
Proposition~\ref{prop:strictfactor} provides the hypothesis of Lemma~\ref{lem:factorcriterion}.
\end{proof}

\begin{remark}[Alternative route through the filtered unital criterion]
For strictly positive $\Lambda_1$, Corollary~1 also follows from the unital-channel result of Ref.~\cite{MasajadaFellousStreltsov2026}, which allows a completely positive intermediate filter. Write $\Lambda_1 = \Phi_{A^{-1}} \circ \Upsilon \circ \Phi_{B^{-1}}$ as in Eq.~\eqref{eq:sinkhorninverse}. Moving the prefilter to the reference system gives
\begin{equation}
  J(\Lambda_2 \circ \Lambda_1) = (\Phi_{\overline{B}^{-1}} \otimes \mathrm{id}\bigr) \left[ J\!\left(\Lambda_2 \circ \Phi_{A^{-1}} \circ \Upsilon\right) \right] ,
\end{equation}
and since $\Phi_{\overline{B}^{-1}}$ is invertible with a completely positive inverse, one operator is separable exactly when the other is. If $\Lambda_2 \circ \Lambda_1 \notin \mathrm{EB}$, the filtered unital criterion with filter $\Phi_{A^{-1}}$ therefore gives $\Upsilon \otimes \Lambda_2 \notin \mathrm{EA}$. Because $\Upsilon \otimes \Lambda_2 = (\Phi_A \otimes \mathrm{id}) \circ (\Lambda_1 \otimes \Lambda_2) \circ (\Phi_B \otimes \mathrm{id})$, Lemma~1 shows that $\Lambda_1 \otimes \Lambda_2 \in \mathrm{EA}$ would force $\Upsilon \otimes \Lambda_2 \in \mathrm{EA}$, so $\Lambda_1 \otimes \Lambda_2 \notin \mathrm{EA}$. The transpose-factorization route of Proposition~1 is retained because it yields the stronger statement that $\Lambda$ and $\Lambda^T$ are equivalent under invertible completely positive filters, which is the structure used in Sec.~VI to formulate the higher-dimensional question.
\end{remark}

\subsection{Extension to arbitrary qubit channels}

Let \(\Dcal\) be the completely depolarizing qubit channel,
\begin{equation}
    \Dcal(X)=\frac{\operatorname{Tr}(X)}{2}I.
\label{eq:depolarizing}
\end{equation}
For \(0<\varepsilon<1\), define
\begin{equation}
    \Lambda_{1,\varepsilon}
    =(1-\varepsilon)\Lambda_1+\varepsilon\Dcal.
\label{eq:regularized}
\end{equation}
For every nonzero \(X\ge0\),
\begin{align}
\Lambda_{1,\varepsilon}(X)
&=(1-\varepsilon)\Lambda_1(X)
+\varepsilon\frac{\operatorname{Tr}(X)}{2}I
>0,
\end{align}
so \(\Lambda_{1,\varepsilon}\) is strictly positive.

\begin{theorem}[Parallel EA implies sequential EB]
\label{thm:main}
Let \(\Lambda_1\) and \(\Lambda_2\) be arbitrary qubit channels. If
\begin{equation}
    \Lambda_1\otimes\Lambda_2\in\EA,
\label{eq:mainpremise}
\end{equation}
then
\begin{equation}
    \Lambda_2\circ\Lambda_1\in\EB.
\label{eq:mainconclusion}
\end{equation}
\end{theorem}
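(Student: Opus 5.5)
The proof reduces to Corollary~\ref{cor:strict} by means of the regularized channels $\Lambda_{1,\varepsilon}$ of Eq.~\eqref{eq:regularized}, and then passes to the limit $\varepsilon\to0$. The first step is to show that the hypothesis \eqref{eq:mainpremise} survives regularization, that is, $\Lambda_{1,\varepsilon}\otimes\Lambda_2\in\EA$ for every $0<\varepsilon<1$. By linearity,
\begin{equation}
\Lambda_{1,\varepsilon}\otimes\Lambda_2=(1-\varepsilon)\,\Lambda_1\otimes\Lambda_2+\varepsilon\,\Dcal\otimes\Lambda_2 .
\end{equation}
For any $X\ge0$, the first term maps $X$ into $\Sep$ by assumption. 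The second term gives $(\Dcal\otimes\Lambda_2)(X)=\tfrac{I}{2}\otimes\Lambda_2(\operatorname{Tr}_A X)$, which is a product of positive operators. Since $\Sep$ is a convex cone, the sum lies in $\Sep$. This is the key reason to perturb with $\Dcal$: it is EB, and tensoring it with anything is EA, so the perturbation does not leave the EA class.

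The second step applies Corollary~\ref{cor:strict}. Each $\Lambda_{1,\varepsilon}$ is strictly positive, as verified just before the theorem, so the corollary gives $\Lambda_2\circ\Lambda_{1,\varepsilon}\in\EB$ for all $\varepsilon\in(0,1)$.

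The third step is the limit. The map $\varepsilon\mapsto\Lambda_2\circ\Lambda_{1,\varepsilon}=(1-\varepsilon)\Lambda_2\circ\Lambda_1+\varepsilon\,\Lambda_2\circ\Dcal$ is affine, and so is the corresponding Choi state
\begin{equation}
J(\Lambda_2\circ\Lambda_{1,\varepsilon})=(1-\varepsilon)J(\Lambda_2\circ\Lambda_1)+\varepsilon J(\Lambda_2\circ\Dcal).
\end{equation}
This Choi state converges to $J(\Lambda_2\circ\Lambda_1)$ as $\varepsilon\to0$. The set of separable two-qubit states is closed; equivalently, the EB set is closed, as recalled in the preliminaries, and for two qubits one can also see this directly because PPT is a closed condition. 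Hence $J(\Lambda_2\circ\Lambda_1)$ is separable, and $\Lambda_2\circ\Lambda_1\in\EB$.

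Most of the difficulty has already been absorbed into Proposition~\ref{prop:strictfactor}. The only step needing care is the first, namely that EA is preserved under mixing $\Lambda_1$ with $\Dcal$. The convexity argument above settles it cleanly. Applying the theorem with the roles of the two channels exchanged then gives the opposite composition order.
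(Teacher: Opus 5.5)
Your proposal is correct and follows the paper's proof step for step: regularize $\Lambda_1$ with $\Dcal$, keep EA via the product form of $(\Dcal\otimes\Lambda_2)(X)$ and convexity of $\Sep$, apply Corollary~\ref{cor:strict} to the strictly positive $\Lambda_{1,\varepsilon}$, and pass to the limit using continuity of the Choi map and closedness of the separable (EB) set. Your explicit affine formula for $J(\Lambda_2\circ\Lambda_{1,\varepsilon})$ and the PPT remark are simply more concrete versions of the same closedness argument.
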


\begin{proof}
Assume Eq.~\eqref{eq:mainpremise}. For any bipartite positive operator \(X_{AB}\),
\begin{equation}
(\Dcal\otimes\Lambda_2)(X_{AB})
=\frac{I_A}{2}\otimes
\Lambda_2\!\left(\operatorname{Tr}_A X_{AB}\right),
\label{eq:Dproduct}
\end{equation}
which is a positive product operator. Hence
\(\Dcal\otimes\Lambda_2\in\EA\).

Using Eq.~\eqref{eq:regularized},
\begin{align}
\Lambda_{1,\varepsilon}\otimes\Lambda_2
&=(1-\varepsilon)(\Lambda_1\otimes\Lambda_2)
+\varepsilon(\Dcal\otimes\Lambda_2).
\end{align}
The cone of separable positive operators is convex, and therefore the set of EA maps is convex. Since both terms on the right-hand side are EA,
\begin{equation}
\Lambda_{1,\varepsilon}\otimes\Lambda_2\in\EA
\quad\text{for every }\varepsilon\in(0,1).
\end{equation}
The channel \(\Lambda_{1,\varepsilon}\) is strictly positive, so Corollary~\ref{cor:strict} yields
\begin{equation}
\Lambda_2\circ\Lambda_{1,\varepsilon}\in\EB
\quad\text{for every }\varepsilon>0.
\label{eq:epsilonEB}
\end{equation}

As \(\varepsilon\to0\),
\begin{equation}
\Lambda_2\circ\Lambda_{1,\varepsilon}
\longrightarrow
\Lambda_2\circ\Lambda_1
\end{equation}
in any norm on the finite-dimensional space of superoperators. The Choi map is continuous, and a channel is EB if and only if its Choi state is separable \cite{HorodeckiShorRuskai2003}. The finite-dimensional set of separable density operators is closed. Hence the set of EB channels is closed, and the limit of Eq.~\eqref{eq:epsilonEB} gives Eq.~\eqref{eq:mainconclusion}.
\end{proof}

Taking the contrapositive gives the form directly relevant to source placement.

\begin{corollary}[Qualitative midpoint feasibility]
\label{cor:conjecture}
For arbitrary qubit channels \(\Lambda_1\) and \(\Lambda_2\),
\begin{equation}
\boxed{
\Lambda_2\circ\Lambda_1\notin\EB
\;\Longrightarrow\;
\Lambda_1\otimes\Lambda_2\notin\EA
}.
\label{eq:boxed}
\end{equation}
\end{corollary}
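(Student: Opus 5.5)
This is the contrapositive of Theorem~\ref{thm:main}, so the plan is simply to invoke that theorem and negate both sides. Fix arbitrary qubit channels \(\Lambda_1\) and \(\Lambda_2\) and assume \(\Lambda_2\circ\Lambda_1\notin\EB\). Suppose, for contradiction, that \(\Lambda_1\otimes\Lambda_2\in\EA\). Theorem~\ref{thm:main} places no restriction on the channels beyond their being qubit channels, so it applies directly. It gives \(\Lambda_2\circ\Lambda_1\in\EB\), which contradicts the assumption. Hence \(\Lambda_1\otimes\Lambda_2\notin\EA\), which is Eq.~\eqref{eq:boxed}.

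The only point to check is that the EB and EA notions in Eq.~\eqref{eq:boxed} are the ones used in Theorem~\ref{thm:main}. Here EB means a separable Choi state, and EA means that every positive input is mapped into the cone \(\Sep\). For CPTP maps, Eq.~\eqref{eq:EAdef} agrees with the usual definition in terms of density operators. Therefore the negations in Eq.~\eqref{eq:boxed} have their standard operational meaning: some input to \(\id\otimes(\Lambda_2\circ\Lambda_1)\) stays entangled, and some density operator input to \(\Lambda_1\otimes\Lambda_2\) yields an entangled output.

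No step here is a genuine obstacle. All the substantive work was done in Proposition~\ref{prop:strictfactor}, the depolarizing regularization, and the closedness argument in the proof of Theorem~\ref{thm:main}. A useful remark to add is that the channels' roles can be exchanged. Applying the corollary to the pair \((\Lambda_2,\Lambda_1)\) gives \(\Lambda_1\circ\Lambda_2\notin\EB\Rightarrow\Lambda_2\otimes\Lambda_1\notin\EA\). Since \(\Lambda_2\otimes\Lambda_1\) equals \(\Lambda_1\otimes\Lambda_2\) up to the swap of tensor factors, and the swap preserves \(\Sep\), this covers the endpoint placement at the other end of the line as well.
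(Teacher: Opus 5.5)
Your proposal is correct and matches the paper's proof, which likewise obtains the corollary as the contrapositive of Theorem~\ref{thm:main}. Your additional remarks on the EA/EB conventions and on exchanging the two channels via the swap are also correct; the swap argument is the same one the paper gives in Corollary~\ref{cor:bothorders}.
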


\begin{proof}
This is the contrapositive of Theorem~\ref{thm:main}.
\end{proof}

\begin{corollary}[Both endpoint orientations]
\label{cor:bothorders}
If \(\Lambda_1\otimes\Lambda_2\in\EA\), then both sequential compositions are EB:
\begin{equation}
\Lambda_2\circ\Lambda_1\in\EB,
\qquad
\Lambda_1\circ\Lambda_2\in\EB.
\end{equation}
Equivalently, if either endpoint placement preserves entanglement for some input state, then midpoint placement preserves entanglement for some input state.
\end{corollary}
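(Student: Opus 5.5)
The first conclusion, \(\Lambda_2\circ\Lambda_1\in\EB\), is exactly Theorem~\ref{thm:main} applied to the pair \((\Lambda_1,\Lambda_2)\). For the opposite composition order I will reduce to Theorem~\ref{thm:main} by exchanging the roles of the channels. I need to check that the hypothesis \(\Lambda_1\otimes\Lambda_2\in\EA\) implies \(\Lambda_2\otimes\Lambda_1\in\EA\).

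Let \(\mathcal{S}\) be the swap channel \(X\mapsto SXS\), where \(S\) is the flip operator on \(\mathbb{C}^2\otimes\mathbb{C}^2\). Then
\[
\Lambda_2\otimes\Lambda_1=\mathcal{S}\circ(\Lambda_1\otimes\Lambda_2)\circ\mathcal{S}.
\]
Take any \(X\ge0\). Then \(\mathcal{S}(X)\ge0\), so \((\Lambda_1\otimes\Lambda_2)(\mathcal{S}(X))\in\Sep(A{:}B)\). The swap carries \(\sum_i X_i\otimes Y_i\) to \(\sum_i Y_i\otimes X_i\), which is still a sum of positive product operators, so \(\mathcal{S}\) maps the separable cone into itself. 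This gives \(\Lambda_2\otimes\Lambda_1\in\EA\).

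Theorem~\ref{thm:main}, now applied with \(\Lambda_2\) as the first channel and \(\Lambda_1\) as the second, yields \(\Lambda_1\circ\Lambda_2\in\EB\). Nothing in that theorem restricts which channel plays which role, since both are arbitrary qubit channels.

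For the equivalent formulation, I take the contrapositive of the two implications together. Suppose an endpoint placement preserves entanglement for some input state. Then \(\id\otimes(\Lambda_2\circ\Lambda_1)\) or \(\id\otimes(\Lambda_1\circ\Lambda_2)\) fails to be EA, which means one of the two compositions is not EB. Hence \(\Lambda_1\otimes\Lambda_2\notin\EA\). By Eq.~\eqref{eq:EAdef}, applied for CPTP maps to density operators, there is then an input state whose image under \(\Lambda_{\mathrm{mid}}\) is entangled.

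The only step needing any care is the swap invariance of the EA property, and that is immediate. The whole corollary is bookkeeping on top of Theorem~\ref{thm:main}.
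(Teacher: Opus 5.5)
Your proposal is correct and matches the paper's proof: the first inclusion is Theorem~\ref{thm:main}; for the second, conjugation by the swap preserves the separable cone, so \(\Lambda_2\otimes\Lambda_1\in\EA\), and Theorem~\ref{thm:main} applied to the ordered pair \((\Lambda_2,\Lambda_1)\) gives \(\Lambda_1\circ\Lambda_2\in\EB\). Your explicit contrapositive derivation of the ``equivalently'' clause, using the definitions of EB and EA, spells out what the paper leaves implicit and is fine.
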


\begin{proof}
The first statement is Theorem~\ref{thm:main}. If \(\Lambda_1\otimes\Lambda_2\) is EA, then \(\Lambda_2\otimes\Lambda_1\) is also EA because the two maps are related by unitary swaps of the input and output subsystems, which preserve separability. Applying Theorem~\ref{thm:main} to the ordered pair \((\Lambda_2,\Lambda_1)\) gives \(\Lambda_1\circ\Lambda_2\in\EB\).
\end{proof}

\begin{remark}[No factorwise entanglement-breaking conclusion]
\label{rem:no-factorwise}
The premise \(\Lambda_1\otimes\Lambda_2\in\EA\) does not imply that either local factor is EB. For qubit depolarizing channels
\begin{equation}
\mathcal{E}_{q}(X)=qX+(1-q)\operatorname{Tr}(X)\frac{I}{2},
\end{equation}
Section~III of Ref.~\cite{FilippovRybarZiman2012}, after the positivity analysis culminating in its Eq.~(3), derives for generally distinct depolarizing parameters that \(\mathcal{E}_{q_1}\otimes\mathcal{E}_{q_2}\) is EA if and only if \(q_1q_2\le 1/3\), whereas a single \(\mathcal{E}_{q}\) is EB if and only if \(q\le 1/3\). Thus \(q_1=q_2=1/2\) gives an EA parallel channel although neither factor is EB. The composition has depolarizing parameter \(q_1q_2=1/4\) and is EB, exactly as Theorem~\ref{thm:main} requires.
\end{remark}

\section{Consequences for entanglement-source placement}

Corollary~2 proves the midpoint-optimality conjecture of Ref.~\cite{MasajadaFellousStreltsov2026} in the feasibility sense defined there. If the source is placed at an endpoint and the transmitted half passes through \(\Lambda_1\) followed by \(\Lambda_2\), then entanglement can survive for some input exactly when \(\Lambda_2\circ\Lambda_1\) is not EB. Corollary~\ref{cor:conjecture} shows that in that case \(\Lambda_1\otimes\Lambda_2\) cannot be EA, so there exists at least one bipartite input whose entanglement survives midpoint transmission.

This result should not be conflated with the stronger quantitative numerical conjecture in Ref.~\cite{MasajadaFellousStreltsov2026}. That conjecture compares a partially-transposed Choi eigenvalue for the sequential channel with an SDP quantity associated with the midpoint channel. Corollary~\ref{cor:conjecture} proves only the binary implication ``endpoint feasible \(\Rightarrow\) midpoint feasible''; it does not establish a pointwise ordering of negativity, concurrence, fidelity, or any other entanglement measure. Optimal inputs for noisy qubit channels can in fact be nonmaximally entangled, particularly for nonunital noise \cite{Pal2014Singlet,FilippovFrizenKolobova2018,SiddhuSmolin2023Optimal}.

The proof also clarifies the role of the low-Kraus-rank analysis in Ref.~\cite{MasajadaFellousStreltsov2026}. Its transpose-factorization criterion is the direct precursor of Lemma~\ref{lem:factorcriterion}, while the rank-\(\le3\) construction is replaced here by the Sinkhorn factorization in Proposition~\ref{prop:strictfactor} and the boundary regularization in Theorem~\ref{thm:main}.

\section{Relation to previous EA, EB, and composition results}

EA and EB have developed as distinct but related channel classes \cite{MoravcikovaZiman2010,FilippovRybarZiman2012,FilippovZiman2013,HorodeckiShorRuskai2003,Ruskai2003QubitEB,LamiHuber2016}. Recent work extends these notions in different directions. Mallick, Ganguly, and Majumdar introduce partially entanglement-breaking and partially entanglement-annihilating channels to track reductions of Schmidt number rather than only the binary separable/entangled distinction \cite{MallickGangulyMajumdar2026}. La Piana and M\"uller-Hermes study Lorentz-entanglement-breaking and Lorentz-entanglement-annihilating maps through factorization and operator-ideal norms \cite{LaPianaMullerHermes2026}. Neither framework directly imposes the heterogeneous parallel-versus-sequential implication of Theorem~\ref{thm:main}.

Aubrun and M\"uller-Hermes place entanglement annihilation in a more general ordered-vector-space setting based on maximal and minimal tensor products of cones \cite{AubrunMullerHermes2023}. Their max-entanglement-annihilation condition is stronger than the standard physical EA condition used here: it is an all-tensor-powers requirement involving maximal cone tensor products. Because \(\mathrm{PSD}(\mathbb{C}^2)\) is order-isomorphic to a Lorentz cone, their resilience theorem implies that a qubit map satisfying that stronger max-EA property is itself EB. This does not subsume Theorem~\ref{thm:main}. Under ordinary physical EA, even \(\Lambda\otimes\Lambda\in\EA\) need not imply \(\Lambda\in\EB\), as the depolarizing example in Remark~\ref{rem:no-factorwise} shows; the conclusion available here is instead \(\Lambda^2\in\EB\). For heterogeneous pairs, the theorem analogously constrains the ordered composition without forcing either factor to be EB.

A neighboring literature studies EB behavior under repeated or structured composition. EB indices and entanglement-saving channels quantify persistence under powers \cite{LamiGiovannetti2015Indices,LamiGiovannetti2016Saving}, eventual EB behavior is tied to primitivity and long-time dynamics \cite{RahamanJaquesPaulsen2018,HansonRouzeFranca2020,AhiableEtAl2021}, and the PPT\(^2\) program asks when compositions of structured positive maps become EB \cite{KennedyManorPaulsen2018,ChristandlMullerHermesWolf2019,ChenYangTang2019,SinghNechita2022}. Theorem~\ref{thm:main} is of a different type: a parallel EA premise for two possibly different local channels forces a single ordered composition to be EB. The local-filter step is closely related to the use of invertible normal forms in Ref.~\cite{FilippovFrizenKolobova2018}, while the present transpose factorization combines that viewpoint with the source-placement criterion of Ref.~\cite{MasajadaFellousStreltsov2026}.

\section{Dimensional scope and outlook}

The depolarizing regularization and closedness arguments are not intrinsically restricted to qubits, and quantum Sinkhorn scaling has finite-dimensional formulations beyond \(d=2\) \cite{Gurvits2004,GeorgiouPavon2015,Cariello2019}. The qubit restriction enters in Proposition~\ref{prop:strictfactor}. For qubits, the conclusion is stronger than the asymmetric factorization required by Lemma~\ref{lem:factorcriterion}: every strictly positive qubit channel is equivalent to its channel transpose by invertible CP pre- and postfilters, and the same strong equivalence already holds for the unital representative through Eq.~\eqref{eq:upsilontranspose}.

This separates the higher-dimensional problem into two levels. The strong question is whether every unital qudit channel \(\Upsilon\) is CP-filter equivalent to its transpose, i.e., whether there exist invertible CP maps \(\mathcal{L}\) and \(\mathcal{R}\), with CP inverses, such that
\begin{equation}
    \Upsilon^T
    =\mathcal{L}\circ\Upsilon\circ\mathcal{R}.
\label{eq:quditstrong}
\end{equation}
The weaker question, which is sufficient for the source-placement theorem, asks only whether there exist a positive map \(\mathcal{F}\) and a completely positive map \(\mathcal{E}\) such that
\begin{equation}
    \Upsilon^T
    =\mathcal{F}\circ\Upsilon\circ\mathcal{E}.
\label{eq:quditquestion}
\end{equation}
Qubit unital channels satisfy the strong condition. A positive answer to the weak condition for any class of unital qudit channels would, after the same Sinkhorn and regularization steps, extend the midpoint-feasibility theorem to the corresponding class of arbitrary qudit channels. Failure of the strong condition would not by itself preclude such an extension; failure of the weak condition would identify the precise obstruction to this proof strategy.

The factorization viewpoint also suggests a broader strategy for channel-comparison problems: pass to a normal form by invertible filters, establish the desired relation for the normal form, and recover boundary channels through a convex regularization. Related filter-normal-form methods have proved useful in entanglement-robustness problems \cite{VerstraeteDehaeneDeMoor2003,FilippovFrizenKolobova2018}.

\section{Conclusion}

For arbitrary qubit channels \(\Lambda_1\) and \(\Lambda_2\), the main result establishes
\begin{equation}
\Lambda_1\otimes\Lambda_2\in\EA
\quad\Longrightarrow\quad
\Lambda_2\circ\Lambda_1\in\EB.
\end{equation}
The proof follows the transpose-factorization route introduced in Ref.~\cite{MasajadaFellousStreltsov2026}. Quantum Sinkhorn scaling first converts a strictly positive channel to a unital qubit representative; the unital normal form then shows that the transposed representative is obtained by unitary pre- and postprocessing, producing the required factorization through the original channel. Depolarizing regularization extends the result from the strictly positive interior to the full qubit-channel set.

By contraposition, any entanglement that can survive an endpoint configuration can also survive the midpoint configuration for a suitable input state, and exchanging the two channels gives the same conclusion for the opposite endpoint. This proves the midpoint-optimality conjecture of Ref.~\cite{MasajadaFellousStreltsov2026} in its original feasibility sense. The stronger quantitative SDP inequality conjectured in that work remains open. In higher dimensions, the strong CP-filter-equivalence question in Eq.~\eqref{eq:quditstrong} and the weaker sufficient factorization in Eq.~\eqref{eq:quditquestion} isolate two concrete levels at which the qubit argument may or may not extend.

\bibliographystyle{apsrev4-2}
\bibliography{referencias}

@article{MasajadaFellousStreltsov2026,
  author        = {Piotr Masajada and Marco Fellous-Asiani and Alexander Streltsov},
  title         = {Optimizing entanglement distribution via noisy quantum channels},
  journal       = {Phys. Rev. A},
  volume        = {113},
  pages         = {052414},
  year          = {2026},
  doi           = {10.1103/hdzn-fwpj},
  eprint        = {2506.06089},
  archivePrefix = {arXiv},
  primaryClass  = {quant-ph}
}

@article{Pal2014Singlet,
  author  = {Rajarshi Pal and Somshubhro Bandyopadhyay and Sibasish Ghosh},
  title   = {Entanglement sharing through noisy qubit channels: One-shot optimal singlet fraction},
  journal = {Phys. Rev. A},
  volume  = {90},
  pages   = {052304},
  year    = {2014},
  doi     = {10.1103/PhysRevA.90.052304}
}

@article{Streltsov2015Unified,
  author  = {Alexander Streltsov and Remigiusz Augusiak and Maciej Demianowicz and Maciej Lewenstein},
  title   = {Progress towards a unified approach to entanglement distribution},
  journal = {Phys. Rev. A},
  volume  = {92},
  pages   = {012335},
  year    = {2015},
  doi     = {10.1103/PhysRevA.92.012335}
}

@article{Zuppardo2016Excessive,
  author  = {Margherita Zuppardo and Tanjung Krisnanda and Tomasz Paterek and Somshubhro Bandyopadhyay and Anindita Banerjee and Prasenjit Deb and Saronath Halder and Kavan Modi and Mauro Paternostro},
  title   = {Excessive distribution of quantum entanglement},
  journal = {Phys. Rev. A},
  volume  = {93},
  pages   = {012305},
  year    = {2016},
  doi     = {10.1103/PhysRevA.93.012305}
}

@article{Streltsov2012QuantumCost,
  author  = {Alexander Streltsov and Hermann Kampermann and Dagmar Bru{\ss}},
  title   = {Quantum Cost for Sending Entanglement},
  journal = {Phys. Rev. Lett.},
  volume  = {108},
  pages   = {250501},
  year    = {2012},
  doi     = {10.1103/PhysRevLett.108.250501}
}

@article{Chuan2012Discord,
  author  = {T. K. Chuan and J. Maillard and K. Modi and T. Paterek and M. Paternostro and M. Piani},
  title   = {Quantum Discord Bounds the Amount of Distributed Entanglement},
  journal = {Phys. Rev. Lett.},
  volume  = {109},
  pages   = {070501},
  year    = {2012},
  doi     = {10.1103/PhysRevLett.109.070501}
}

@article{Cubitt2003Separable,
  author  = {T. S. Cubitt and F. Verstraete and W. D{\"u}r and J. I. Cirac},
  title   = {Separable States Can Be Used To Distribute Entanglement},
  journal = {Phys. Rev. Lett.},
  volume  = {91},
  pages   = {037902},
  year    = {2003},
  doi     = {10.1103/PhysRevLett.91.037902}
}

@article{SiddhuSmolin2023Optimal,
  author  = {Vikesh Siddhu and John Smolin},
  title   = {Optimal one-shot entanglement sharing},
  journal = {Phys. Rev. A},
  volume  = {108},
  pages   = {032617},
  year    = {2023},
  doi     = {10.1103/PhysRevA.108.032617}
}

@article{MoravcikovaZiman2010,
  author  = {Lenka Morav{\v{c}}{\'i}kov{\'a} and M{\'a}rio Ziman},
  title   = {Entanglement-annihilating and entanglement-breaking channels},
  journal = {J. Phys. A: Math. Theor.},
  volume  = {43},
  pages   = {275306},
  year    = {2010},
  doi     = {10.1088/1751-8113/43/27/275306}
}

@article{FilippovRybarZiman2012,
  author  = {Sergey N. Filippov and Tom{\'a}{\v{s}} Ryb{\'a}r and M{\'a}rio Ziman},
  title   = {Local two-qubit entanglement-annihilating channels},
  journal = {Phys. Rev. A},
  volume  = {85},
  pages   = {012303},
  year    = {2012},
  doi     = {10.1103/PhysRevA.85.012303}
}

@article{FilippovZiman2013,
  author  = {Sergey N. Filippov and M{\'a}rio Ziman},
  title   = {Bipartite entanglement-annihilating maps: Necessary and sufficient conditions},
  journal = {Phys. Rev. A},
  volume  = {88},
  pages   = {032316},
  year    = {2013},
  doi     = {10.1103/PhysRevA.88.032316}
}

@article{HorodeckiShorRuskai2003,
  author  = {Michael Horodecki and Peter W. Shor and Mary Beth Ruskai},
  title   = {Entanglement Breaking Channels},
  journal = {Rev. Math. Phys.},
  volume  = {15},
  pages   = {629--641},
  year    = {2003},
  doi     = {10.1142/S0129055X03001709}
}

@article{Ruskai2003QubitEB,
  author  = {Mary Beth Ruskai},
  title   = {Qubit Entanglement Breaking Channels},
  journal = {Rev. Math. Phys.},
  volume  = {15},
  pages   = {643--662},
  year    = {2003},
  doi     = {10.1142/S0129055X03001710}
}

@article{LamiGiovannetti2015Indices,
  author  = {Ludovico Lami and Vittorio Giovannetti},
  title   = {Entanglement-Breaking Indices},
  journal = {J. Math. Phys.},
  volume  = {56},
  pages   = {092201},
  year    = {2015},
  doi     = {10.1063/1.4931482}
}

@article{LamiGiovannetti2016Saving,
  author  = {Ludovico Lami and Vittorio Giovannetti},
  title   = {Entanglement-Saving Channels},
  journal = {J. Math. Phys.},
  volume  = {57},
  pages   = {032201},
  year    = {2016},
  doi     = {10.1063/1.4942495}
}

@article{LamiHuber2016,
  author  = {Ludovico Lami and Marcus Huber},
  title   = {Bipartite depolarizing maps},
  journal = {J. Math. Phys.},
  volume  = {57},
  pages   = {092201},
  year    = {2016},
  doi     = {10.1063/1.4962339}
}

@article{FilippovFrizenKolobova2018,
  author  = {Sergey N. Filippov and Vladimir V. Frizen and Daria V. Kolobova},
  title   = {Ultimate entanglement robustness of two-qubit states against general local noises},
  journal = {Phys. Rev. A},
  volume  = {97},
  pages   = {012322},
  year    = {2018},
  doi     = {10.1103/PhysRevA.97.012322}
}

@article{AubrunMullerHermes2023,
  author  = {Guillaume Aubrun and Alexander M{\"u}ller-Hermes},
  title   = {Annihilating Entanglement Between Cones},
  journal = {Commun. Math. Phys.},
  volume  = {400},
  pages   = {931--976},
  year    = {2023},
  doi     = {10.1007/s00220-022-04621-5}
}

@article{MallickGangulyMajumdar2026,
  author  = {Bivas Mallick and Nirman Ganguly and Archan S. Majumdar},
  title   = {On the characterization of partially entanglement breaking and annihilating channels},
  journal = {J. Math. Phys.},
  volume  = {67},
  pages   = {082201},
  year    = {2026},
  doi     = {10.1063/5.0274921}
}

@article{LaPianaMullerHermes2026,
  author  = {Francesca La Piana and Alexander M{\"u}ller-Hermes},
  title   = {Annihilating and breaking Lorentz cone entanglement},
  journal = {Linear Algebra Appl.},
  volume  = {739},
  pages   = {68--103},
  year    = {2026},
  doi     = {10.1016/j.laa.2026.03.012}
}

@article{Sinkhorn1964,
  author  = {Richard Sinkhorn},
  title   = {A Relationship Between Arbitrary Positive Matrices and Doubly Stochastic Matrices},
  journal = {Ann. Math. Stat.},
  volume  = {35},
  pages   = {876--879},
  year    = {1964},
  doi     = {10.1214/aoms/1177703591}
}

@article{Gurvits2004,
  author  = {Leonid Gurvits},
  title   = {Classical complexity and quantum entanglement},
  journal = {J. Comput. Syst. Sci.},
  volume  = {69},
  pages   = {448--484},
  year    = {2004},
  doi     = {10.1016/j.jcss.2004.06.003}
}

@article{GeorgiouPavon2015,
  author  = {Tryphon T. Georgiou and Michele Pavon},
  title   = {Positive contraction mappings for classical and quantum Schr{\"o}dinger systems},
  journal = {J. Math. Phys.},
  volume  = {56},
  pages   = {033301},
  year    = {2015},
  doi     = {10.1063/1.4915289}
}

@article{Filippov2021Sinkhorn,
  author  = {Sergey N. Filippov},
  title   = {Entanglement Robustness in Trace Decreasing Quantum Dynamics Caused by Depolarization and Polarization Dependent Losses},
  journal = {Quanta},
  volume  = {10},
  pages   = {15--21},
  year    = {2021},
  doi     = {10.12743/quanta.v10i1.163},
}

@article{VerstraeteDehaeneDeMoor2001,
  author  = {Frank Verstraete and Jeroen Dehaene and Bart De Moor},
  title   = {Local filtering operations on two qubits},
  journal = {Phys. Rev. A},
  volume  = {64},
  pages   = {010101},
  year    = {2001},
  doi     = {10.1103/PhysRevA.64.010101}
}

@article{VerstraeteDehaeneDeMoor2003,
  author  = {Frank Verstraete and Jeroen Dehaene and Bart De Moor},
  title   = {Normal forms and entanglement measures for multipartite quantum states},
  journal = {Phys. Rev. A},
  volume  = {68},
  pages   = {012103},
  year    = {2003},
  doi     = {10.1103/PhysRevA.68.012103}
}

@article{ChoiLi2023,
  author  = {Man-Duen Choi and Chi-Kwong Li},
  title   = {On unital qubit channels},
  journal = {Quantum Inf. Comput.},
  volume  = {23},
  number  = {7\&8},
  pages   = {562--576},
  year    = {2023},
  doi     = {10.26421/QIC23.7-8-2}
}

@article{RuskaiSzarekWerner2002,
  author  = {Mary Beth Ruskai and Stanislaw Szarek and Elisabeth Werner},
  title   = {An analysis of completely-positive trace-preserving maps on $2\times2$ matrices},
  journal = {Linear Algebra Appl.},
  volume  = {347},
  pages   = {159--187},
  year    = {2002},
  doi     = {10.1016/S0024-3795(01)00547-X}
}

@article{BraunEtAl2014,
  author  = {Daniel Braun and Olivier Giraud and Ion Nechita and Cl{\'e}ment Pellegrini and Marko {\v{Z}}nidari{\v{c}}},
  title   = {A universal set of qubit quantum channels},
  journal = {J. Phys. A: Math. Theor.},
  volume  = {47},
  pages   = {135302},
  year    = {2014},
  doi     = {10.1088/1751-8113/47/13/135302}
}

@article{Cariello2019,
  author  = {Daniel Cariello},
  title   = {Sinkhorn--Knopp theorem for rectangular positive maps},
  journal = {Linear Multilinear Algebra},
  volume  = {67},
  pages   = {2345--2365},
  year    = {2019},
  doi     = {10.1080/03081087.2018.1491524}
}

@article{FujiwaraAlgoet1999,
  author  = {Akio Fujiwara and Paul Algoet},
  title   = {One-to-one parametrization of quantum channels},
  journal = {Phys. Rev. A},
  volume  = {59},
  pages   = {3290--3294},
  year    = {1999},
  doi     = {10.1103/PhysRevA.59.3290}
}

@article{Jamiolkowski1972,
  author  = {Andrzej Jamio{\l}kowski},
  title   = {Linear transformations which preserve trace and positive semidefiniteness of operators},
  journal = {Rep. Math. Phys.},
  volume  = {3},
  pages   = {275--278},
  year    = {1972},
  doi     = {10.1016/0034-4877(72)90011-0}
}

@article{Choi1975,
  author  = {Man-Duen Choi},
  title   = {Completely positive linear maps on complex matrices},
  journal = {Linear Algebra Appl.},
  volume  = {10},
  pages   = {285--290},
  year    = {1975},
  doi     = {10.1016/0024-3795(75)90075-0}
}

@article{Peres1996,
  author  = {Asher Peres},
  title   = {Separability Criterion for Density Matrices},
  journal = {Phys. Rev. Lett.},
  volume  = {77},
  pages   = {1413--1415},
  year    = {1996},
  doi     = {10.1103/PhysRevLett.77.1413}
}

@article{Horodecki1996,
  author  = {Micha{\l} Horodecki and Pawe{\l} Horodecki and Ryszard Horodecki},
  title   = {Separability of mixed states: necessary and sufficient conditions},
  journal = {Phys. Lett. A},
  volume  = {223},
  pages   = {1--8},
  year    = {1996},
  doi     = {10.1016/S0375-9601(96)00706-2}
}

@article{VidalWerner2002,
  author  = {Guifr{\'e} Vidal and Reinhard F. Werner},
  title   = {Computable measure of entanglement},
  journal = {Phys. Rev. A},
  volume  = {65},
  pages   = {032314},
  year    = {2002},
  doi     = {10.1103/PhysRevA.65.032314}
}

@article{HorodeckiReview2009,
  author  = {Ryszard Horodecki and Pawe{\l} Horodecki and Micha{\l} Horodecki and Karol Horodecki},
  title   = {Quantum entanglement},
  journal = {Rev. Mod. Phys.},
  volume  = {81},
  pages   = {865--942},
  year    = {2009},
  doi     = {10.1103/RevModPhys.81.865}
}

@article{AhiableEtAl2021,
  author  = {Jennifer Ahiable and David W. Kribs and Jeremy Levick and Rajesh Pereira and Mizanur Rahaman},
  title   = {Entanglement breaking channels, stochastic matrices, and primitivity},
  journal = {Linear Algebra Appl.},
  volume  = {629},
  pages   = {219--231},
  year    = {2021},
  doi     = {10.1016/j.laa.2021.08.013}
}

@article{WolfCirac2008,
  author  = {Michael M. Wolf and J. Ignacio Cirac},
  title   = {Dividing Quantum Channels},
  journal = {Commun. Math. Phys.},
  volume  = {279},
  pages   = {147--168},
  year    = {2008},
  doi     = {10.1007/s00220-008-0411-y}
}

@article{KennedyManorPaulsen2018,
  author  = {Matthew Kennedy and Nicholas A. Manor and Vern I. Paulsen},
  title   = {Compositions of PPT Maps},
  journal = {Quantum Inf. Comput.},
  volume  = {18},
  number  = {5\&6},
  pages   = {472--480},
  year    = {2018},
}

@article{ChristandlMullerHermesWolf2019,
  author  = {Matthias Christandl and Alexander M{\"u}ller-Hermes and Michael M. Wolf},
  title   = {When Do Composed Maps Become Entanglement Breaking?},
  journal = {Ann. Henri Poincar{\'e}},
  volume  = {20},
  pages   = {2295--2322},
  year    = {2019},
  doi     = {10.1007/s00023-019-00774-7}
}

@article{ChenYangTang2019,
  author  = {Lin Chen and Yu Yang and Wai-Shing Tang},
  title   = {Positive-partial-transpose square conjecture for $n=3$},
  journal = {Phys. Rev. A},
  volume  = {99},
  pages   = {012337},
  year    = {2019},
  doi     = {10.1103/PhysRevA.99.012337}
}

@article{SinghNechita2022,
  author  = {Satvik Singh and Ion Nechita},
  title   = {The PPT$^2$ Conjecture Holds for All Choi-Type Maps},
  journal = {Ann. Henri Poincar{\'e}},
  volume  = {23},
  pages   = {3311--3329},
  year    = {2022},
  doi     = {10.1007/s00023-022-01166-0}
}

@article{RahamanJaquesPaulsen2018,
  author  = {Mizanur Rahaman and Samuel Jaques and Vern I. Paulsen},
  title   = {Eventually Entanglement Breaking Maps},
  journal = {J. Math. Phys.},
  volume  = {59},
  pages   = {062201},
  year    = {2018},
  doi     = {10.1063/1.5024385}
}

@article{HansonRouzeFranca2020,
  author  = {Eric P. Hanson and Cambyse Rouz{\'e} and Daniel Stilck Fran{\c{c}}a},
  title   = {Eventually Entanglement Breaking Markovian Dynamics: Structure and Characteristic Times},
  journal = {Ann. Henri Poincar{\'e}},
  volume  = {21},
  pages   = {1517--1571},
  year    = {2020},
  doi     = {10.1007/s00023-020-00906-4}
}

\end{document}